\documentclass[letterpaper,10pt,conference]{ieeeconf}

\IEEEoverridecommandlockouts
\usepackage{amsmath}
\usepackage{amssymb}
\usepackage{amsfonts}
\usepackage{mathtools}

\usepackage{graphicx}
\usepackage{booktabs}
\usepackage{balance}
\usepackage{multirow}

\usepackage{algorithmic}

\usepackage{cite}
\usepackage{xcolor}

\newtheorem{theorem}{Theorem}

\newtheorem{assumption}{Assumption}

\newcommand{\hmin}{h_{\min}}
\newcommand{\Ss}{\mathcal{S}_{\mathrm{s}}}
\newcommand{\D}{\mathcal{D}}
\newcommand{\Hs}{\mathcal{H}}
\newcommand{\X}{\mathcal{X}}
\newcommand{\Ad}{A_\delta}
\newcommand{\Ts}{T_{\mathrm{s}}}

\title{\LARGE \bf
Exact-Safe MPPI: Safety-Aware Sampling with Nonsmooth Control Barrier Functions
}

\author{Alexander Coker and Leilei Cui%
\thanks{A. Coker and L. Cui are with the Department of Mechanical Engineering, University of New Mexico.}%
}
\usepackage{amsmath,amssymb}
\usepackage{tikz}
\usetikzlibrary{positioning,arrows.meta,fit,calc}
\usepackage{url}

\usepackage[ruled,vlined,linesnumbered]{algorithm2e}
\usepackage{float}
\DontPrintSemicolon

\definecolor{mppiorange}{RGB}{252,224,190}
\definecolor{qpblue}{RGB}{206,226,246}
\definecolor{subblue}{RGB}{232,241,251}
\definecolor{plantgy}{RGB}{223,228,235}

\begin{document}

\maketitle
\thispagestyle{empty}
\pagestyle{empty}

\begin{abstract}
Model predictive path integral (MPPI) control plans actions by sampling and evaluating trajectories, but standard MPPI does not guarantee safety. Incorporating control barrier function (CBF) filters into each rollout allows the planner to account for safety corrections. However, combining multiple CBFs through a soft-minimum approximation can exclude states that satisfy the original safety constraints, restricting the trajectories available to the planner. This paper proposes Exact-Safe MPPI, which uses the exact minimum of multiple barrier functions and applies a CBF-based quadratic program during both planning and control execution. To handle nondifferentiability where multiple barriers attain the minimum, the filter simultaneously enforces the CBF conditions for all $\delta$-active barriers, whose values lie within a prescribed tolerance $\delta>0$ of the minimum. Under suitable feasibility and regularity assumptions, we establish forward invariance of the certified safe set without the shrinkage introduced by soft-minimum smoothing. We also characterize the state-dependent approximation gap and explain how multiple barriers with values at or near the minimum increase this shrinkage. Furthermore, we show that increasing the number of MPPI samples cannot recover trajectories excluded by the soft-minimum filter. We demonstrate the effectiveness of the proposed approach in a navigation task requiring passage through a narrow gap between two obstacles. Video demonstrations are available at \url{https://github.com/lc-lab25/Exact-Safe-MPPI}.
\end{abstract}

\begin{keywords}
Control barrier functions, sampling-based control, nonsmooth analysis.
\end{keywords}

\section{Introduction}
Model predictive control (MPC) is an optimization-based feedback control approach that repeatedly plans over a finite horizon. At each time step, it uses the current system state to compute a control sequence that minimizes a specified cost while satisfying the system dynamics and constraints, including safety requirements. Only the first control input is applied, and the process repeats using the updated state. Initially adopted in industrial process control, MPC has since found widespread use in robotics, transportation, and other fields~\cite{rawlings2017model,borrelli2017predictive}. However, solving the underlying optimization problem can be computationally demanding, particularly for nonlinear systems with nonsmooth or discontinuous costs and complex constraints. Building on path integral methods for stochastic optimal control~\cite{Kappen_2005}, model predictive path integral (MPPI) control offers a sampling-based approach within the same receding-horizon framework~\cite{williams17,williams17b}. At each planning step, MPPI samples candidate control sequences, simulates the resulting trajectories, and updates the control sequence by assigning greater weight to samples with lower costs. These simulations can run in parallel without requiring derivatives of the dynamics or cost, and can accommodate nonsmooth or discontinuous costs, making MPPI attractive for robotic applications~\cite{XueICRA2025,pan2026hybrid}. However, standard MPPI may generate many trajectories that violate the robot’s safety constraints, spending substantial computation on unsafe candidate trajectories that contribute little to the control update.

To improve the safety of MPPI, one can penalize constraint violations in the objective function~\cite{williams17} or bias sampling toward trajectories that remain within the safe region~\cite{biased}. These approaches encourage safe behavior but do not, by themselves, guarantee safety. They may also spend substantial computation evaluating unsafe trajectories that contribute little to the control update. Another approach combines MPPI with control barrier function (CBF)-based safety mechanisms~\cite{shield,tao,gsmppi}. If a safety filter modifies the proposed control but these modifications are not considered during planning, the predicted trajectories may differ from the resulting motion. This mismatch can degrade performance and lead to repeated interventions, oscillations, or ``chattering''~\cite{hu2024active,leung2020infusing,Bejarano2025,oh2026provably}. Accounting for the safety filter during planning can help reduce these effects. In~\cite{tao}, CBF conditions guide the selection of the mean and covariance of the sampled controls through a semidefinite program, which can be more computationally demanding than a standard QP-based CBF filter. GS-MPPI~\cite{gsmppi} instead applies a QP-based CBF filter to the sampled controls within each rollout. MPPI therefore evaluates trajectories generated by the filtered controls and accounts for the filter when updating its proposed control sequence. To handle multiple safety constraints, GS-MPPI combines their barrier functions into a single smooth barrier using a soft-minimum approximation. However, combining multiple CBFs through a soft-minimum approximation can exclude
trajectories in the exact certified set, causing the planner to treat an otherwise
feasible solution as infeasible, as illustrated in Section~IV.B.

In this paper, we propose Exact-Safe MPPI, a framework that incorporates multiple CBFs into MPPI using their exact minimum rather than a soft-minimum approximation (see Fig. \ref{fig:pipeline}). During planning, a CBF-QP filter modifies the sampled controls within each rollout. The same filter is applied to the control proposed by MPPI before it is executed on the system. To represent multiple safety constraints, we use the exact minimum $\hmin(x)=\min_j h_j(x)$ instead of the soft minimum $h_\rho(x)=-\tfrac{1}{\rho}\ln\sum_{j=1}^{\ell}e^{-\rho h_j(x)}$ used in~\cite{gsmppi}. Because $h_\rho(x)\leq\hmin(x)$, the exact minimum avoids the reduction in the certified safe set introduced by the soft-minimum approximation. Although $\hmin$ may be nondifferentiable where multiple barriers attain the minimum, its Clarke generalized gradient can be expressed in terms of the gradients of these active barriers (i.e. $h_j=\hmin$ for all active $j$) \cite{kamaldar}. Motivated by this characterization and following \cite{glot17,glot19,glot21,kamaldar}, we enforce the CBF conditions simultaneously for all $\delta$-active barriers, defined by $h_j(x)-\hmin(x)\leq\delta$, where $\delta>0$ is a prescribed tolerance. In summary, our contributions are twofold:
\begin{enumerate}
\item We develop Exact-Safe MPPI by incorporating an exact-minimum CBF-QP filter into both trajectory sampling and control execution. This framework accounts for safety corrections during planning while avoiding the safe-set shrinkage introduced by soft-minimum composition.
\item We characterize the state-dependent gap between the soft minimum $h_\rho$ and the exact minimum $\hmin$. This analysis explains how barriers with equal or nearly equal minimum values increase the approximation gap and lead to a nonuniform reduction in the certified safe set.
\end{enumerate}

\section{Preliminaries}

Consider the control-affine nonlinear system
\begin{equation}\label{eq:plant}
\dot x=f(x)+g(x)u,
\end{equation}
where $x\in\mathbb{R}^n$ is the state, $u\in\mathbb{R}^m$ is the control input, and $f:\mathbb{R}^n\to\mathbb{R}^n$ and $g:\mathbb{R}^n\to\mathbb{R}^{n\times m}$ are locally Lipschitz. Safety requirements of system \eqref{eq:plant} are specified by multiple smooth constraint functions $h_j:\mathbb{R}^n\to\mathbb{R}$, $j=1,\ldots,\ell$, with each constraint requiring $h_j(x)\geq0$. The individual super-level sets and their intersection define the safe set:
\begin{equation}\label{eq:original-safe-set}
\mathcal{C}_{j,0}\triangleq
\{x\in\mathbb{R}^n:h_j(x)\geq0\},
\qquad
\Ss\triangleq\bigcap_{j=1}^{\ell}\mathcal{C}_{j,0}.
\end{equation}

\subsection{Control barrier functions}

For a differentiable scalar function $h$, let $L_fh(x)\triangleq\nabla h(x)^\top f(x)$ and $L_gh(x)\triangleq\nabla h(x)^\top g(x)$ denote its Lie derivatives. Control barrier
functions (CBFs) provide sufficient conditions for rendering a prescribed safe set forward invariant~\cite{ames17}. For a relative-degree-one
constraint $h_j(x)\geq0$, the standard CBF condition is
\begin{equation}\label{eq:cbf}
L_fh_j(x)+L_gh_j(x)u+\alpha_j(h_j(x))\geq0,
\end{equation}
where $\alpha_j$ is an extended class-$\mathcal{K}$ function \cite{ames17}. Any locally Lipschitz controller satisfying \eqref{eq:cbf} on $\mathcal{C}_{j,0}$ renders that set forward
invariant under the standard CBF hypotheses.

When $h_j$ has relative degree $d_j>1$, the input $u$ does not appear in its first time derivative, and \eqref{eq:cbf} cannot directly constrain the input. High-order CBFs recursively introduce intermediate barriers until the input appears~\cite{xiao,nguyen}.

\subsection{High-Order Barrier Formulation}\label{sec:composite}

\begin{assumption}[Relative degree]\label{as:rd}
Each constraint $h_j$ has relative degree $d_j$ with respect to
\eqref{eq:plant} on $\Ss$; that is,
$L_gL_f^r h_j\equiv0$ for all $r\le d_j-2$ and
$L_gL_f^{d_j-1}h_j\neq0$.
\end{assumption}

For each constraint $h_j$, the high-order CBF recursion starts with the
original barrier $b_{j,0}=h_j$ and produces a sequence of barrier functions.
The intermediate barrier functions preserve the lower-order safety requirements,
while the final barrier function $b_{j,d_j-1}$ is the first whose condition
depends on the control input. Following~\cite{xiao,nguyen}, define
\begin{equation}\label{eq:casc}
b_{j,i+1}(x)
=
L_f b_{j,i}(x)+\alpha_{j,i}(b_{j,i}(x)),
\qquad
0\le i\le d_j-2,
\end{equation}
where $\alpha_{j,i}\in C^{d_j-1-i}$ are extended class-$\mathcal K$
functions. For each recursion level, define
\[
\mathcal{C}_{j,i}
\triangleq
\{x\in\mathbb{R}^n:b_{j,i}(x)\ge0\}.
\]
The set of states satisfying all lower-order conditions is
\begin{equation}\label{eq:C}
\mathcal{C}
\triangleq
\bigcap_j\bigcap_{i\le d_j-2}\mathcal{C}_{j,i}.
\end{equation}

For the final barrier condition, we use one common locally Lipschitz extended
class-$\mathcal K$ function $\alpha$ across all constraints; that is,
$\alpha_{j,d_j-1}\equiv\alpha$ for every $j$. The lower-order recursion
functions $\alpha_{j,i}$, $i\le d_j-2$, may remain constraint-specific. If
the control input $u$ satisfies
\begin{align}\label{eq:highCBF}
    &L_f b_{j,d_j-1}(x)
    +L_g b_{j,d_j-1}(x)u
    +\alpha\bigl(b_{j,d_j-1}(x)\bigr)
    \geq0 \nonumber\\
    &\hspace{35mm} j=1,\ldots,\ell,
\end{align}
for all $x\in\Ss$, then the exact high-order certified set
\begin{equation}\label{eq:exact-set}
\D
\triangleq
\mathcal{C}
\cap
\bigcap_{j=1}^{\ell}\mathcal{C}_{j,d_j-1}
\subseteq\Ss
\end{equation}
is forward invariant~\cite{xiao,nguyen}.

GS-MPPI~\cite{gsmppi} combines the final barrier functions into one
differentiable scalar barrier using the soft minimum
\begin{equation}\label{eq:comp}
h_\rho(x)
\triangleq
-\tfrac1\rho
\ln\textstyle\sum_{j=1}^{\ell}
e^{-\rho b_{j,d_j-1}(x)},
\end{equation}
with $\rho>0$. This is a smooth approximation of the generally
nondifferentiable exact minimum
\begin{equation}\label{eq:exact-min}
\hmin(x)
\triangleq
\min_j b_{j,d_j-1}(x).
\end{equation}
The soft minimum is a smooth under-approximation~\cite{nesterov}, satisfying
$h_\rho(x)\leq\hmin(x)$ for all $x\in\mathbb{R}^n$.

Combining the lower-order conditions in $\mathcal{C}$ with the soft-minimum
condition gives the soft-minimum certified set
\begin{equation}\label{eq:sets}
\Hs
\triangleq
\{x\in\mathbb{R}^n:h_\rho(x)\ge0\}
\cap\mathcal{C}.
\end{equation}
The intersection with $\mathcal C$ is necessary because the final barrier
conditions alone do not imply the lower-order intermediate barrier conditions
when $d_j>1$. Since $\{\hmin\ge0\}\cap\mathcal C=\D$ and
$h_\rho\le\hmin$, $\Hs\subseteq\D$. The high-order CBF recursion itself may
give $\D\subsetneq\Ss$; all exactness claims are relative to $\D$.

\subsection{Model Predictive Path Integral (MPPI)}

MPPI is a sampling-based receding-horizon control method that evaluates sampled control sequences through forward simulation and updates the sampling distribution using exponential cost weights~\cite{williams17,williams17b}. Consider the discrete-time prediction model $x_{k+1}=F(x_k,v_k)$ obtained by discretizing \eqref{eq:plant}. Let $N\in\mathbb N$ denote the planning horizon and $M=(\mu_0,\ldots,\mu_{N-1})$ the mean desired-control sequence. For the $s${th} rollout, let $x_0^{(s)}$ be the current state and let $\epsilon_k^{(s)}\sim\mathcal{N}(0,\Sigma)$ be independent perturbations with covariance $\Sigma\succ0$. MPPI generates $K$ candidate sequences and trajectories by sampling
\begin{equation}\label{eq:mppi_rollouts}
v_k^{(s)}=\mu_k+\epsilon_k^{(s)},\qquad
x_{k+1}^{(s)}=F(x_k^{(s)},v_k^{(s)}),
\end{equation}
for $k=0,\ldots,N-1$. Let $V=(v_0,\ldots,v_{N-1})$ denote the random control sequence, and let $\mathbb P_M$ denote its probability measure with the mean sequence $M$. The $s$th sampled realization is denoted by $V^{(s)}=(v_0^{(s)},\ldots,v_{N-1}^{(s)})$. For a standard importance-sampling formulation, the rollout score is
\begin{equation}\label{eq:mppi_cost}
J^{(s)}
=\phi(x_N^{(s)})
+\sum_{k=0}^{N-1}\left[
q(x_k^{(s)})
+\lambda \mu_k^\top\Sigma^{-1}\epsilon_k^{(s)}
\right],
\end{equation}
where $q$ and $\phi$ denote the stage state cost and the terminal cost, respectively, and $\lambda>0$ is the temperature parameter. The cross term accounts for sampling around
the mean sequence under the standard coupling $R=\lambda\Sigma^{-1}$ between the
quadratic control-cost matrix and the perturbation covariance; terms common to all
rollouts are omitted. The mean sequence is updated according to
\begin{align}\label{eq:mppi_update}
\begin{split}
\mu_k^+ &=\mu_k+\sum_{s=1}^{K}w^{(s)}\epsilon_k^{(s)}, \\
w^{(s)}
&=\frac{\exp\!\left[-(J^{(s)}-J_{\min})/\lambda\right]}
{\sum_{r=1}^{K}\exp\!\left[-(J^{(r)}-J_{\min})/\lambda\right]},
\end{split}
\end{align}
where $J_{\min}:=\min_s J^{(s)}$ improves numerical stability without changing the
weights. The first updated input $\mu_0^+$ is passed to the safety filter.

GS-MPPI uses a differentiable soft minimum to combine multiple barrier functions into
one scalar barrier. Figure~\ref{fig:prelim-gap} illustrates the resulting certified-set
shrinkage for a two-constraint passage. The solid black curves bound the exact
high-order certified set $\D$, whereas the red dashed curves bound the smaller
soft-minimum certified set $\Hs$. Along the dotted tie axis, the two component barriers
have equal value. The shaded region $\D\setminus\Hs$ satisfies every component safety
constraint but is excluded by soft-minimum smoothing; in this example, the excluded
region closes the feasible passage. Thus, near states where multiple barriers are
simultaneously restrictive, smoothing can make a feasible solution unavailable to the
planner. This paper avoids that unnecessary loss by using the exact barrier composition
within MPPI rollouts.
\begin{figure}[t]
    \centering
    \includegraphics[width=\columnwidth]{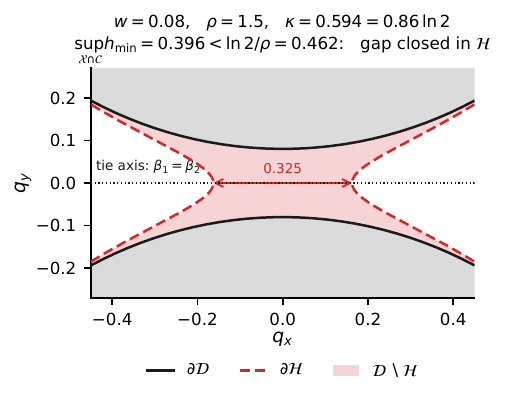}
    \caption{The soft-minimum certified set $\Hs$ is a subset of the exact high-order certified set $\D$. Near simultaneous barrier constraints, soft-minimum smoothing can exclude states that satisfy all component constraints.}
    \label{fig:prelim-gap}
\end{figure}

\section{Exact constraints composition}\label{sec:exact}
Since $h_\rho$ is a soft-minimum approximation of $\hmin$ satisfying $h_\rho(x)\leq\hmin(x)$, the soft-minimum construction certifies the smaller set $\Hs\subseteq\D$. This inclusion can be strict, introducing conservatism in the certified safe set. To address this conservatism, we work directly with the exact minimum $\hmin$, which may be nondifferentiable at points where multiple component barriers attain the minimum. At such points, its Clarke generalized gradient is the convex hull of the gradients of the active component barriers~\cite{clarke}. Enforcing a condition for only one active barrier does not, in general, ensure that the other active constraints are satisfied~\cite{glot17}. We therefore develop a construction that simultaneously enforces all $\delta$-active constraints. We refer to the resulting method as \emph{Exact-Safe MPPI}.

\subsection{The safety filter with $\delta$-active barriers}\label{sec:filter}

At a tie where multiple component barriers attain the minimum of $\hmin(x)$, more than one component barrier can be relevant to safety. Rather than selecting one component barrier, Exact-Safe MPPI enforces all barriers whose values are close to the exact minimum. Fix an active-set tolerance $\delta>0$ and define
\begin{equation}\label{eq:active-set}
\Ad(x)\triangleq\left\{j\in\{1,\ldots,\ell\}:
b_{j,d_j-1}(x)-\hmin(x)\leq\delta\right\}.
\end{equation}
Thus, $\Ad(x)$ contains every component barrier lying within $\delta$ of the minimum, and $k_\delta(x)\triangleq|\Ad(x)|$ is the number of $\delta$-active barriers. For each $j\in\Ad(x)$, define
\begin{equation}\label{eq:component-cbf}
\begin{aligned}
a_j(x)&\triangleq L_f b_{j,d_j-1}(x)
+\alpha(b_{j,d_j-1}(x)),\\
B_j(x)&\triangleq L_g b_{j,d_j-1}(x)
\in\mathbb{R}^{1\times m}.
\end{aligned}
\end{equation}
Let $a(x) \in \mathbb{R}^{k_{\delta}}$ and $B(x)  \in \mathbb{R}^{k_{\delta} \times m}$ denote the vector and matrix obtained by vertically stacking $a_j$ and $B_j$ over $j \in \Ad(x)$, respectively.

A larger $\gamma$ penalizes slack variables $\hat{\mu} = [\hat{\mu}_1,\cdots,\hat{\mu}_{k_\delta}]^\top$ more heavily and therefore favors correcting the control input over relaxing a constraint away from its boundary. The exact-safe QP filter is formulated as
\begin{equation}\label{eq:qp}
\begin{aligned}
\min_{\hat u,\hat\mu}\quad
&\tfrac12\|\hat u-v\|^2+\tfrac\gamma2\|\hat\mu\|^2 \\
\mathrm{s.t.}\quad
&a_j(x)+B_j(x)\hat u+\hat\mu_j b_{j,d_j-1}(x)\geq0,
\qquad j\in\Ad(x).
\end{aligned}
\end{equation}
Following the multiplicative slack construction in~\cite{gsmppi}, $\hat\mu_j$ can relax a component constraint away from its highest-order boundary. Its effect vanishes when $b_{j,d_j-1}(x)=0$, so the unrelaxed CBF condition must hold at the boundary of $\mathcal{C}_{j,d_{j-1}}$. Consequently, the filter enforces every $\delta$-active component barrier as it approaches its boundary.

At a simultaneous boundary, the QP is feasible exactly when the corresponding component-barrier conditions admit a common feasible control input. This is the standard compatibility requirement for any controller that enforces multiple CBFs simultaneously. 

\begin{figure}[t]
\centering
\begin{tikzpicture}[
  x=1cm,
  y=1cm,
  font=\footnotesize,
  block/.style={
    draw=black!75,
    rounded corners=2pt,
    align=center,
    inner xsep=4pt,
    inner ysep=3pt,
    line width=0.5pt
  },
  ar/.style={->, line width=0.55pt}
]

\node[
  block,
  fill=green!17!black!4,
  minimum width=2.10cm
] (planner) at (1.35,3.20) {%
  MPPI Planner\\[-1pt]
  {\scriptsize(11)--(13)}};

\node[
  draw=red!45!blue,
  dashed,
  rounded corners=3pt,
  fill=red!8!blue!10!white,
  minimum width=3.35cm,
  minimum height=3.35cm,
  inner sep=0pt,
  line width=0.5pt
] (filterbox) at (4.50,2.60) {};

\node[
  anchor=north west,
  text=red!45!blue,
  font=\footnotesize\bfseries
] at (2.96,4.12) {Exact Safety Filter};

\node[
  block,
  fill=cyan!22!white,
  minimum width=2.35cm
] (qp) at (4.50,3.20) {%
  Exact Safety QP\\[-1pt]
  {\scriptsize(16)}};

\node[
  block,
  fill=yellow!28!white,
  minimum width=2.55cm
] (select) at (4.50,1.70) {%
    $\delta$-Active Barrier\\[-1pt]
    Selection\\[-1pt]
  {\scriptsize$\mathcal A_\delta(x)$ \;(14)}};

\node[
  block,
  fill=red!18!white,
  minimum width=1.80cm
] (system) at (7.55,3.20) {%
  System\\[-1pt]
  {\scriptsize(1)}};

\draw[ar] (planner.east) --
  node[pos=0.22,above=2pt] {$v$} (qp.west);

\draw[ar] (select.north) --
  node[pos=0.50,right=4pt,align=center]
  {$a_j,\,B_j$\\[-1pt]{\scriptsize(15)}} (qp.south);

\draw[ar] (qp.east) --
  node[pos=0.78,above=2pt] {$\hat u$} (system.west);

\coordinate (xlowR) at (7.55,0.62);
\coordinate (xbranch) at (4.50,0.62);
\coordinate (xlowL) at (1.35,0.62);

\draw[ar] (system.south) -- (xlowR)
          -- node[below=2pt] {$x$} (xlowL)
          -- (planner.south);

\draw[ar] (xbranch) -- (select.south);

\end{tikzpicture}
\caption{Exact-Safe MPPI. The system state $x$ is used by both the MPPI planner
and $\delta$-active barrier selection. The planner proposes $v$, the QP computes the
filtered input $\hat u$, and the system receives $\hat u$.}
\label{fig:pipeline}
\end{figure}
\subsection{Forward invariance under simultaneous enforcement}

For each component barrier, enforcing the highest-order CBF condition in \eqref{eq:highCBF} preserves the lower-order intermediate barrier inequalities, provided they hold initially. Thus, enforcing these conditions for all component barriers renders $\D$, defined in \eqref{eq:exact-set}, forward invariant~\cite{xiao,nguyen}. This result applies separately to each component barrier, including when the components have different relative degrees. It therefore remains only to show that the QP enforces the required highest-order condition whenever a component barrier reaches its boundary. Because the $\delta$-active set can change discontinuously at tie states, closed-loop solutions are interpreted in the Filippov sense~\cite{cortes}.

\begin{theorem}[Forward invariance of $\D$]\label{thm:inv}
Under Assumption~\ref{as:rd}, let $\delta>0$, and suppose that the QP \eqref{eq:qp} is feasible in a neighborhood of $\D$ and returns a measurable, locally bounded control $\hat u$ with locally bounded slack vector $\hat\mu$. Then every Filippov solution of
\[
\dot x=f(x)+g(x)\hat u(x)
\]
starting in $\D$ remains in $\D$ throughout its interval of existence.
\end{theorem}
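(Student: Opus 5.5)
We prove forward invariance through a Nagumo-type barrier argument applied to one component barrier at a time, adapted to Filippov solutions. Fix a Filippov solution $x(\cdot)$ with $x(0)\in\D$. Since $\D$ is the intersection of the superlevel sets $\mathcal C_{j,i}$, it suffices to show that no $b_{j,i}$ can become negative along $x(\cdot)$.

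The key quantity is the highest-order barrier $b_{j,d_j-1}$. By Assumption~\ref{as:rd} and the recursion \eqref{eq:casc}, the lower-order barriers satisfy
\[
\dot b_{j,i}=b_{j,i+1}-\alpha_{j,i}(b_{j,i})
\]
along any absolutely continuous trajectory, for $i\le d_j-2$. This holds because $u$ does not enter these derivatives, so they do not depend on the Filippov set-valued extension. Hence, once $b_{j,d_j-1}\ge0$ is known on $[0,t]$, the standard comparison argument of~\cite{xiao,nguyen} propagates nonnegativity down the chain $b_{j,d_j-2},\dots,b_{j,0}$. The whole proof therefore reduces to showing $b_{j,d_j-1}(x(t))\ge0$ for every $j$ and all $t$.

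For that reduction, we first derive a pointwise inequality valid near the boundary. Take $x$ in the neighborhood where the QP is feasible, with $b_{j,d_j-1}(x)$ small, say $|b_{j,d_j-1}(x)|<\delta/2$. We claim that every $j$ with this property belongs to $\Ad(x)$ whenever $\hmin(x)\ge -\delta/2$. Indeed, $b_{j,d_j-1}-\hmin<\delta$ in that case, and near $\D$ we have $\hmin\approx 0$. The QP constraint then gives
\[
a_j(x)+B_j(x)\hat u(x)\ \ge\ -\hat\mu_j(x)\, b_{j,d_j-1}(x)\ \ge\ -c\,|b_{j,d_j-1}(x)|,
\]
where $c$ bounds the slack locally, which uses the local boundedness of $\hat\mu$. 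Write $\psi_j(x,u)=L_f b_{j,d_j-1}+L_g b_{j,d_j-1}u$. This function is continuous and affine in $u$. The Filippov regularization takes convex hulls of limits of $f+g\hat u$ over neighborhoods minus null sets, and local boundedness of $\hat u$ makes these sets compact. Convexity and continuity then give, for every velocity $w$ in the Filippov set $K[f+g\hat u](x)$,
\[
\nabla b_{j,d_j-1}(x)^\top w\ \ge\ -\alpha(b_{j,d_j-1}(x))-c\,|b_{j,d_j-1}(x)|.
\]
In other words, the inequality survives passage to limits and convex combinations.

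We now conclude by contradiction. Suppose $b_{j,d_j-1}(x(t_1))<0$ for some $t_1$, and let $t_0$ be the last time before $t_1$ at which $\hmin\ge0$ and all $b_{j,i}\ge0$. On a short interval $[t_0,t_0+\tau]$ the trajectory stays in the neighborhood where the inequality above holds. Set $y_j(t)=b_{j,d_j-1}(x(t))$. This function is absolutely continuous and satisfies
\[
\dot y_j\ge-\alpha(y_j)-c|y_j|
\]
almost everywhere, with $y_j(t_0)\ge0$. The right-hand side is locally Lipschitz in $y_j$ and vanishes at $0$, so the comparison lemma forces $y_j\ge0$ on the interval. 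This contradicts the choice of $t_0$.

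The main obstacle is the active-set step, namely guaranteeing that a component near its boundary is actually $\delta$-active. A component $j$ with $b_{j,d_j-1}\approx0$ is dropped from $\Ad(x)$ only if some other component has $b_{k,d_k-1}<-\delta$, which means $\hmin<-\delta$. That cannot happen while the trajectory is still near $\D$, and this is exactly why we argue on the first exit time from $\D$ rather than globally. A second point requiring care is the Filippov convexification of a possibly discontinuous $\hat u$; this is handled because the constraint is affine in $u$ and $\nabla b_{j,d_j-1}$ is continuous.
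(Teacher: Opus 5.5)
Your proposal is correct and follows the paper's argument. Both proofs use the same steps: any highest-order barrier near its zero level stays $\delta$-active by continuity; the slack term vanishes there because $\hat\mu$ is locally bounded; the resulting inequality survives Filippov convexification; a comparison argument keeps $b_{j,d_j-1}$ nonnegative; and the high-order CBF recursion then propagates nonnegativity to the lower-order barriers. You supply details the paper only sketches, namely the explicit comparison inequality $\dot y_j\ge-\alpha(y_j)-c|y_j|$, the limiting argument for the Filippov set, and the first-exit-time setup. The one slip is the remark that $\hmin\approx0$ near $\D$; what you actually need and use is the local bound $\hmin\ge-\delta/2$ near $\D$, since $\hmin$ can be large in the interior of $\D$.
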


\begin{proof}
Consider a highest-order component barrier $b_{j,d_j-1}$ at a state $\bar x\in\D$ for which $b_{j,d_j-1}(\bar x)=0$. Since every highest-order barrier is nonnegative in $\D$, $\hmin(\bar x)=0$, and therefore $j\in\Ad(\bar x)$. By continuity of $b_{j,d_j-1}$ and $\delta>0$, the component $b_{j,d_j-1}$ remains $\delta$-active in a neighborhood of $\bar x$.

At the boundary, the slack term in \eqref{eq:qp} vanishes. Thus, the QP constraint gives the unrelaxed component CBF condition
\[
L_fb_{j,d_j-1}
+
L_gb_{j,d_j-1}\hat u
+
\alpha(b_{j,d_j-1})
\geq0.
\]
Because $\hat\mu$ is locally bounded, the slack contribution $\hat\mu_jb_{j,d_j-1}$ vanishes as $b_{j,d_j-1}\to0$. The resulting boundary inequality is preserved under the Filippov convexification of the closed-loop dynamics. The standard comparison argument~\cite{cortes,khalil} therefore prevents $b_{j,d_j-1}$ from becoming negative.

The standard high-order CBF recursion then implies that nonnegativity of every final barrier function preserves the lower-order intermediate barrier functions defining $\D$~\cite{xiao,nguyen}. Hence every condition defining $\D$ remains satisfied, and the trajectory stays in $\D$.
\end{proof}

\subsection{Integration with MPPI rollouts}

Within each rollout, the sampled control becomes the desired input $v$, and the QP
filter computes the filtered input $\hat u$ applied to the system
\eqref{eq:plant}. The proposed Exact-Safe MPPI is shown in Alg. \ref{alg:exact-safe-mppi}. The MPPI update \eqref{eq:mppi_update} then reweights the resulting
filtered trajectories. The planner uses one Euler step of duration $\Ts$ per rollout
step~\cite{gsmppi}; during execution, the filter is re-evaluated at the smaller substep $\delta t$.

Under exact integration, Theorem~\ref{thm:inv} shows that every feasible filtered rollout beginning in $\D$ remains in $\D$. Thus, the rollout distribution is supported on the reachable set of the filtered closed loop. Sections~\ref{sec:cost}
and~\ref{sec:disc} analyze how soft-minimum composition restricts that set and what
finite-step Euler rollouts inherit.

\begin{algorithm}
\caption{Exact-Safe MPPI Planner}
\label{alg:exact-safe-mppi}
\KwIn{$x_0,\ M,\ \Sigma,\ N,\ K,\ \delta,\ \gamma$}
$M=(\mu_0,\ldots,\mu_{N-1})$: previous best control sequence\;
\KwOut{Filtered control $\hat u_0$}

\For{$s\gets1$ \KwTo $K$}{
    $x_0^{(s)}\gets x_0$\;
    \For{$k\gets0$ \KwTo $N-1$}{
        $\epsilon_k^{(s)}\sim\mathcal N(0,\Sigma)$\;
        $v_k^{(s)}\gets\mu_k+\epsilon_k^{(s)}$\;
        $\mathcal A_\delta^{(s)}\gets\mathcal A_\delta(x_k^{(s)})$\;
        $\hat u_k^{(s)}\gets
        \operatorname{QP}_{\eqref{eq:qp}}(x_k^{(s)},v_k^{(s)})$\;
        $x_{k+1}^{(s)}\gets F(x_k^{(s)},\hat u_k^{(s)})$\;
    }
    $J^{(s)}\gets J(\tau^{(s)})$\;
}
Compute $w^{(s)}$ from \eqref{eq:mppi_update} for $s=1,\ldots,K$\;
\For{$k\gets0$ \KwTo $N-1$}{
    $\mu_k^+\gets\mu_k+\sum_{s=1}^{K}w^{(s)}\epsilon_k^{(s)}$\;
}
$\hat u_0\gets\operatorname{QP}_{\eqref{eq:qp}}(x_0,\mu_0^+)$\;
\Return $\hat u_0$\;
\end{algorithm}

\section{Effect of barrier composition on sampling-based planning}\label{sec:cost}

The standard pointwise approximation bound for \eqref{eq:comp} is
$0\le\hmin-h_\rho\le\ln\ell/\rho$~\cite{rh_softmin,molnar}.
The following refinement shows that the soft minimum differs appreciably from the
exact minimum only where multiple component barriers have nearly equal minimum values.

\begin{theorem}[Tie-localized certified-set shrinkage]\label{thm:erosion}
Fix $x$, let $A(x)\triangleq\{j:b_{j,d_j-1}(x)=\hmin(x)\}$, $k\triangleq|A(x)|\ge1$,
and $\Gamma(x)\triangleq\min_{j\notin A(x)}(b_{j,d_j-1}(x)-\hmin(x))>0$, with
$\Gamma\triangleq+\infty$ if $k=\ell$. Then for all $\rho>0$,
\begin{equation}\label{eq:identity}
\hmin-h_\rho=\frac{\ln k}{\rho}
+\frac1\rho\ln\Big(1+\frac1k\sum_{j\notin A(x)}e^{-\rho(b_{j,d_j-1}-\hmin)}\Big),
\end{equation}
and consequently
\begin{equation}\label{eq:sandwich}
\frac{\ln k}{\rho}\ \le\ \hmin-h_\rho\ \le\ \frac{\ln k}{\rho}+\frac{\ell-k}{k\rho}e^{-\rho\Gamma(x)}.
\end{equation}
\end{theorem}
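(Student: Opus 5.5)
The plan is a direct computation: factor the dominant exponential out of the log-sum-exp in \eqref{eq:comp}, then bound the leftover logarithm using elementary estimates. Abbreviate $b_j\triangleq b_{j,d_j-1}(x)$ and $m\triangleq\hmin(x)$. For $j\in A(x)$ we have $b_j=m$. Write
\[
\sum_{j=1}^{\ell}e^{-\rho b_j}=e^{-\rho m}\Big(k+\sum_{j\notin A(x)}e^{-\rho(b_j-m)}\Big)
= k\,e^{-\rho m}\Big(1+\tfrac1k\sum_{j\notin A(x)}e^{-\rho(b_j-m)}\Big).
\]
Taking $-\tfrac1\rho\ln$ of both sides gives
\[
h_\rho=m-\tfrac{\ln k}{\rho}-\tfrac1\rho\ln\Big(1+\tfrac1k\sum_{j\notin A(x)}e^{-\rho(b_j-m)}\Big).
\]
Rearranging yields the identity \eqref{eq:identity}. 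When $k=\ell$ the sum is empty, so the correction term is $\ln 1=0$; this is consistent with the convention $\Gamma=+\infty$.

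For the lower bound in \eqref{eq:sandwich}, note that every summand $e^{-\rho(b_j-m)}$ is positive. Hence the argument of the second logarithm is at least $1$, the logarithm is nonnegative, and $\hmin-h_\rho\ge\ln k/\rho$.

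For the upper bound, apply $\ln(1+y)\le y$ for $y\ge0$. This bounds the correction term by $\frac{1}{k\rho}\sum_{j\notin A(x)}e^{-\rho(b_j-m)}$. By the definition of $\Gamma(x)$, each index $j\notin A(x)$ satisfies $b_j-m\ge\Gamma(x)>0$, so each term is at most $e^{-\rho\Gamma(x)}$. There are exactly $\ell-k$ such terms, which gives the bound $\frac{\ell-k}{k\rho}e^{-\rho\Gamma(x)}$. In the case $k=\ell$ this bound is $0$, matching the interpretation $e^{-\infty}=0$.

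No step is a real obstacle; the argument is elementary. The only point needing care is the bookkeeping in the edge case $k=\ell$, where the empty sum and $\Gamma=+\infty$ must be interpreted consistently. One should also check that $\Gamma(x)>0$ holds automatically when $k<\ell$. It does, because $A(x)$ collects all indices attaining the minimum and the index set is finite, so every $j\notin A(x)$ has $b_j>m$.
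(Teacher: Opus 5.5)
Your proposal is correct and follows essentially the same route as the paper: factor $e^{-\rho\hmin}$ out of the log-sum-exp to separate the $k$ tied barriers and obtain the identity, use positivity for the lower bound, and use $\ln(1+s)\le s$ with each untied term bounded by $e^{-\rho\Gamma(x)}$ for the upper bound. Your extra bookkeeping for the case $k=\ell$ and your check that $\Gamma(x)>0$ when $k<\ell$ are consistent with the statement's conventions.
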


\begin{proof}
Factoring $e^{-\rho\hmin}$ from the soft-minimum sum separates the $k$ tying
barriers from the remaining barriers. From \eqref{eq:comp},
\begin{equation}\label{eq:erosion-factorization}
\hmin-h_\rho=\frac1\rho\ln\!\left(
k+\sum_{j\notin A(x)}e^{-\rho(b_{j,d_j-1}-\hmin)}\right),
\end{equation}
which gives \eqref{eq:identity} after factoring out $k$. Each untied barrier lies at
least $\Gamma(x)$ above the minimum, so its exponential contribution is at most
$e^{-\rho\Gamma(x)}$. Positivity gives the lower bound in \eqref{eq:sandwich}, while
$\ln(1+s)\le s$ gives the upper bound and the stated tie and non-tie behavior.
\end{proof}

It is observed from Theorem \ref{thm:erosion} that on a tie set ($k\ge2$), the approximation gap decreases proportionally to
$1/\rho$, with leading term $\ln k/\rho$; away from the tie sets ($k=1$),
$0<\hmin-h_\rho\le(\ell-1)e^{-\rho\Gamma(x)}/\rho$.

\subsection{Soft-minimum limits exploration and performance}
In GS-MPPI~\cite{gsmppi}, the soft-minimum safety filter is applied throughout every rollout, limiting which states the planner can explore. With exact integration, every rollout starting in $\Hs$ remains in $\Hs$. Since $\Hs\subseteq\D$, these rollouts cannot reach states in $\D\setminus\Hs$, even though those states belong to the original certified safe set $\D$. Increasing the number of samples only
explores the trajectories that remain in $\Hs$ more thoroughly; it cannot recover a trajectory that requires passing through $\D\setminus\Hs$. By comparison, under the assumptions of Theorem~\ref{thm:inv}, Exact-Safe MPPI rollouts remain in $\D$ without the additional restriction imposed by soft-minimum smoothing.

This restriction can impose a performance gap that persists regardless of the number of samples. Suppose the lowest-cost trajectory that remains within the
exact certified set $\D$ has cost $J^\star$ and must pass through
$\D\setminus\Hs$. If every trajectory that remains within $\Hs$ costs at least
$\Delta>0$ more than this trajectory, then any trajectory $\tau$ generated or
executed by soft-minimum GS-MPPI satisfies
\begin{equation}\label{eq:irreducible-cost-gap}
J(\tau)\geq J^\star+\Delta.
\end{equation}
The soft-minimum filter prevents the planner from considering the lower-cost
trajectory because it passes through $\D\setminus\Hs$. Increasing the number of
samples cannot remove this gap because it does not enlarge the set of trajectories
that the filter permits.

\subsection{Gap closing in a two-obstacle passage}\label{sec:threshold}

Figure~\ref{fig:gap-closing-example} illustrates a vehicle passing from the left side of two obstacles to the right side through the gap between them. Every such trajectory must cross the vertical slice
\[
\X\triangleq\{(q,\dot q):q_x=0\}
\]
where $q = [q_x,q_y]^\top$ denotes the position of the vehicle, and $\dot q=[\dot q_x,\dot q_y]^\top$ is the velocity. At the center of this slice, the upper and lower obstacle barriers are equally
restrictive. The exact minimum retains a passage through $\X$, whereas the soft
minimum is lower by $\ln2/\rho$ at this two-barrier tie. As shown in the figure, this
loss can close the soft-minimum passage even though every component barrier remains
feasible there.

\begin{figure}[t]
    \centering
    \includegraphics[width=\columnwidth]{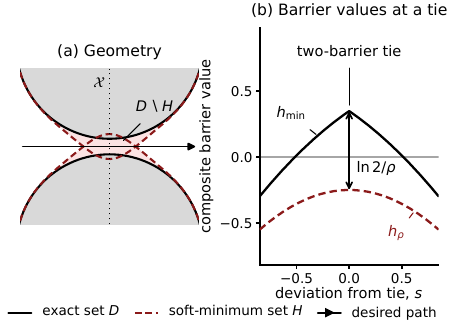}
    \caption{Two-barrier gap closing. \textbf{Left:} every left-to-right trajectory
    must cross the slice $\X$. The exact certified set $\D$ retains the passage,
    whereas the soft-minimum set $\Hs$ can close it; the shaded region is
    $\D\setminus\Hs$. \textbf{Right:} at a two-barrier tie, $h_\rho$ lies
    $\ln2/\rho$ below $h_{\min}$. When this loss exceeds the available barrier margin
    on $\X$, the soft-minimum certified set does not reach the crossing slice.}
    \label{fig:gap-closing-example}
\end{figure}

We evaluate this mechanism on rounded-square and circular-obstacle geometries. The
two-disk calculation below gives the result for circular obstacles.

\emph{Two-disk calculation.}
Consider two disk obstacles with radius $r>0$ and a gap of width
$2w$, where $w>0$ is the half-gap width. Their centers are
\[
o_{1,2}=(0,\pm(r+w)).
\]
The corresponding barrier functions are $h_j(q)=\|q-o_j\|^2-r^2$, and the high-order CBF gain is
$\alpha_{j,0}(s)=as$ for $a>0$.

For this symmetric geometry, the largest exact-barrier value on the crossing slice
occurs at the center of the gap, where $q_y=\dot q_y=0$. Direct evaluation gives
\[
\sup_{\X\cap\mathcal C}\hmin=a\,w(2r+w)>0.
\]
Thus, the exact certified set always reaches the crossing slice for $w>0$. At the
same point, the soft minimum is lower by $\ln2/\rho$. Therefore, the soft-minimum
certified set closes the crossing slice exactly when
\begin{equation}\label{eq:disk-gap-closure}
\rho\,a\,w(2r+w)<\ln2.
\end{equation}
Any inactive speed constraint can only further reduce the soft minimum; it cannot
reopen a passage that the obstacle barriers have closed.

For the two-disk example, define the normalized quantity
\begin{equation}\label{eq:kappa-threshold}
\kappa\triangleq\rho\,a\,w(2r+w)
=
\rho\sup_{\X\cap\mathcal C}\hmin,
\qquad
\kappa_{\mathrm{crit}}=\ln2.
\end{equation}
More generally, Theorem~\ref{thm:erosion} gives the leading value $\ln k$ at a
$k$-constraint tie. Additional constraints can only increase the certified-set
shrinkage, so the observed transition can occur above, but not below, $\ln2$.
Together with the support restriction above, this excludes soft-minimum
continuous-time rollouts from threading the gap below the threshold. The value
$\ln2$ is derived rather than fitted to the experiments.

\section{Experiments}\label{sec:expt}

\subsection{Gap-closing experiment}\label{sec:gap-expt}

The theory predicts that soft-minimum smoothing can close a gap that remains available
under the exact barrier composition. We test whether the predicted threshold
$\kappa_{\mathrm{crit}}=\ln2$ identifies when this occurs in practice.

We use a shared double-integrator vehicle model with two obstacle geometries:
rounded-square obstacles represented by $p$-norm barriers ($p=8$), and circular
obstacles represented by squared-distance barriers. The vehicle must pass between the
obstacles to reach its goal. We vary the gap width and the soft-minimum parameter
$\rho$, then compare soft-minimum GS-MPPI~\cite{gsmppi} with Exact-Safe MPPI. Each
condition uses $50$ random seeds. In total, the experiment contains $4400$
soft-minimum runs and $550$ exact-composition runs. 

We call a run \emph{threaded} when it crosses the required slice $\X$ between the
obstacles. A run \emph{completes} when it also reaches the goal. We exclude the
narrowest disk gap because the exact-composition planner crossed the gap in $60\%$ of
runs but never reached the goal; that case tests goal reachability rather than
soft-minimum gap closure.

None of the $1050$ soft-minimum runs with $\kappa<\ln2$ threaded the gap
(Fig.~\ref{fig:collapse}). The smallest successful value was
$0.875=1.26\ln2$. For the disk with $w=0.12$, the exact-barrier margin available at
the crossing is $0.6060$. At $\rho=1$, the soft-minimum loss is
$\ln2/\rho=0.6931$, which is larger than this available margin. The best
soft-minimum value on the crossing slice is therefore $-0.0871$, and the gap is
closed. At $\rho=2$, the corresponding value is positive and the gap is open.

The theory gives the continuous-time gap-closing threshold exactly:
$\kappa_{\mathrm{crit}}=\ln2$. The zero-success point immediately above $\ln2$ does not contradict the theoretical
threshold. At this condition, the soft-minimum certified set reaches the crossing
slice, so the gap is geometrically open. However, the opening is sufficiently narrow
that none of the $50$ sampled rollouts found a trajectory through it.

Because we test only eight values of $\rho$, the experiments cannot determine the
exact value at which the finite-step planner begins successfully threading each gap.
As predicted, no run threaded below $\ln2$; vehicle dynamics, Euler integration, and
a narrow passage can require a larger value before successful crossing occurs. The
speed constraint makes the soft minimum slightly more conservative, so it can only
shift successful crossing above the predicted two-obstacle threshold.

\begin{figure}[t]
\centering
\includegraphics[width=\columnwidth]{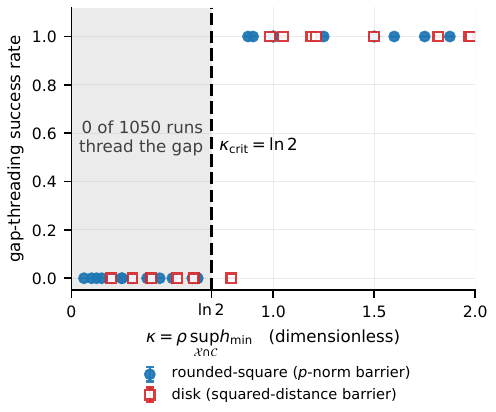}
\caption{Gap-threading success for the $4400$ soft-minimum runs. No run threaded below
the derived threshold $\ln2$. The zero-success point immediately above $\ln2$
corresponds to a geometrically open passage that was too narrow for any of the $50$
sampled rollouts to find.}
\label{fig:collapse}
\end{figure}

A closed soft-minimum gap prevents the planner from finding a path through the gap; it
does not require the vehicle to collide. The four below-threshold runs in
Table~\ref{tab:deadlock} that are neither threaded nor deadlocked remain safe and
moving when the planning horizon ends. Across all $4950$ gap-closing runs, no safety
violation was recorded; the smallest recorded value of $\min_j h_j$ was
$+2.4\times10^{-2}$ (Fig.~\ref{fig:stall}).

\begin{figure}[t]
\centering
\includegraphics[width=\columnwidth]{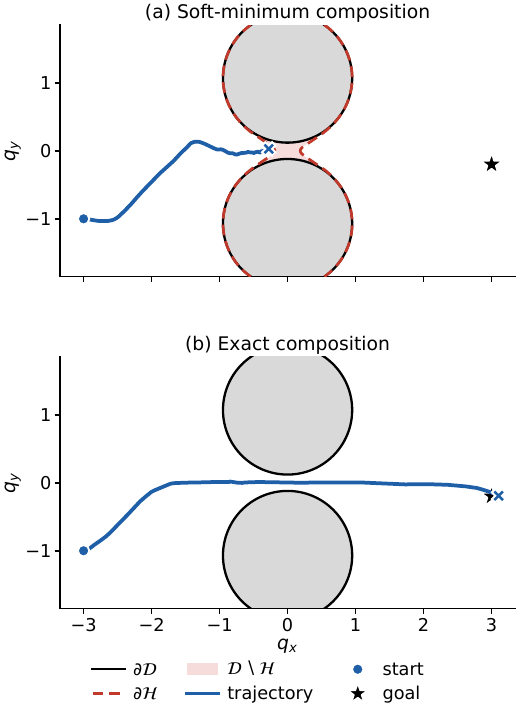}
\caption{Logged median runs at $w=0.12$. The soft minimum at $\rho=1$
($\kappa=0.606<\ln2$) stops before the gap, whereas the exact composition threads the
gap.}
\label{fig:stall}
\end{figure}

\begin{table}[t]
\caption{Soft-minimum outcomes by $\kappa$ band. Rows sum to one; moving runs count as
nonthreading.}
\label{tab:deadlock}
\centering\footnotesize
\setlength{\tabcolsep}{4pt}
\begin{tabular}{@{}lrrrrr@{}}
\toprule
& runs & threaded & deadlock & moving & viol.\\
\midrule
$\kappa<\ln2$ & 1050 & 0.000 & 0.996 & 0.004 & \textbf{0}\\
$\ln2\le\kappa<1$ & 350 & 0.857 & 0.143 & 0.000 & \textbf{0}\\
$\kappa\ge1$ & 3000 & 1.000 & 0.000 & 0.000 & \textbf{0}\\
\bottomrule
\end{tabular}
\end{table}

\subsection{Nine-constraint comparison}\label{sec:nine-expt}

The gap-closing experiment involves only two obstacle barriers. We next test whether
simultaneously enforcing all $\delta$-active barriers matters in a more complicated map
with nine constraints.

The GS-MPPI paper~\cite{gsmppi} does not specify the obstacle geometry used in its
benchmark. We therefore reconstruct an approximate version from the published figure
using rounded-square obstacles. This is a controlled reconstruction, not an exact
replication of the original benchmark.

The published temperature $\lambda=1$ makes the first MPPI update in our
reconstruction depend almost entirely on one rollout. We therefore use
$\lambda=300$ for every controller in this comparison so that the update uses many
rollouts. All controller settings use the same costs, random seeds, integration method,
and step sizes.

We compare soft-minimum GS-MPPI over several values of $\rho$ with Exact-Safe MPPI.
We also test a single-constraint version of the exact filter that enforces only the
most restrictive barrier at each state, rather than all $\delta$-active barriers together.
Each controller setting is tested from four goals using $100$ random seeds per goal.

Exact-Safe MPPI produced no recorded safety violations in the nine-constraint map.
Soft-minimum GS-MPPI produced no violations at
$\rho\in\{5,20,50\}$, followed by $34$ violations at $\rho=200$ and $53$ violations
at $\rho=1000$. These large-$\rho$ failures are numerical: although $h_\rho$ remains
differentiable for finite $\rho$, it becomes difficult to evaluate accurately near
barrier ties as $\rho$ becomes very large.

The single-constraint filter produced safety violations and greater control chattering.
In contrast, simultaneously enforcing all $\delta$-active component barriers produced no
violations in the tested runs. This supports the need for simultaneous enforcement in
the proposed filter.
\section{Discussion}\label{sec:disc}


The safety result in Theorem~\ref{thm:inv} applies to the continuous-time filtered system and does not directly extend to its Euler discretization. With a sufficiently large integration step, a numerical trajectory may leave the certified safe set even when the corresponding continuous-time trajectory remains safe~\cite{breeden}. In our experiments, refining the execution integration step
reduced these errors, and the executed Exact-Safe MPPI trajectories at
$\delta t=2\times10^{-3}\,\mathrm{s}$ did not leave $\D$. The guarantee remains
conditional on the QP being feasible and returning locally bounded control and slack
values.

The support argument shows that soft-minimum filtering can exclude trajectories that
pass through $\D\setminus\Hs$. When the lowest-cost exact-safe trajectory requires
such a region, the soft-minimum planner can have an unavoidable cost gap. This result
does not show that Exact-Safe MPPI finds the globally lowest-cost trajectory in $\D$,
nor that it recovers every trajectory reachable under the exact filter.

The reconstructed nine-constraint benchmark contains no goal in
$\D\setminus\Hs$. The rounded-square and disk experiments therefore isolate the
gap-closing mechanism, but they do not establish how often comparable certified-set
shrinkage affects other environments, cost functions, or sampling distributions.

\section{Conclusion}
This paper presented Exact-Safe MPPI, which incorporates an exact-minimum CBF-QP filter into both trajectory sampling and control execution of MPPI. By simultaneously enforcing the conditions for all $\delta$-active barriers, the framework accommodates multiple high-order CBFs and preserves the certified safe set under the stated feasibility and regularity assumptions. We characterized the state-dependent gap introduced by soft-minimum smoothing and showed how it can exclude safe passages from every rollout. Increasing the number of samples cannot recover these excluded paths. Navigation experiments supported this analysis: no soft-minimum rollout threaded the gap below the predicted threshold, while exact composition allowed passage without recorded safety violations in the tested cases. A nine-constraint comparison further demonstrated the importance of enforcing multiple active barriers simultaneously. Future work will address safety guarantees for sampled-data implementations.

\section*{Acknowledgment}
ChatGPT was used to improve grammar and the formal tone of the writing. The authors reviewed all edits and take full responsibility for the content of this paper.

\balance
\bibliographystyle{IEEEtran}
\bibliography{references}

\end{document}